\newtheorem{theorem}{Theorem}
\newtheorem{proposition}{Proposition}
\newtheorem{lemma}{Lemma}
\newtheorem{corollary}{Corollary}
\newtheorem{definition}{Definition}[section]
\crefname{assumption}{Assumption}{Assumptions}
\crefname{problem}{Problem}{Problems}
\crefname{claim}{Claim}{Claims}
\crefname{fact}{Fact}{Facts}
\newcommand{\F}{\mathbb{F}}
\newcommand{\bv}[1]{\mathbf{#1}}
\newcommand{\norm}[1]{\|#1\|}
\newcommand{\tr}{\mathrm{tr}}
\newcommand{\poly}{\mathrm{poly}}
\DeclareMathOperator{\Z}{\mathbb{Z}}
\title{A Note on Fine-Grained Quantum Reductions for Linear Algebraic Problems}
\date{}
\author{
Kyle Doney \\ UMass Amherst \\ \texttt{kylepdoney@gmail.com} \and 
Cameron Musco  \\ UMass Amherst \\ \texttt{cmusco@umass.edu} 
}
  \newcommand{\arXiv}[1]{\url{http://arxiv.org/abs/#1}}
\begin{document}

\maketitle

\begin{abstract}
We observe that any $T(n)$ time algorithm (quantum or classical) for several central linear algebraic problems, such as computing $\det(\bv A)$, $\tr(\bv A^3)$, or $\tr(\bv A^{-1})$ for an $n \times n$ integer matrix $\bv A$, yields a $O(T(n)) + \tilde O(n^2)$ time \emph{quantum algorithm} for $n \times n$ matrix-matrix multiplication. That is, on quantum computers, the complexity of these problems is essentially equivalent to that of matrix multiplication. Our results follow by first observing that the Bernstein-Vazirani algorithm gives a direct quantum reduction from matrix multiplication to computing $\tr(\bv A\bv B\bv C)$ for $n \times n$ inputs $\bv A,\bv B,\bv C$. We can then reduce $\tr(\bv A\bv B\bv C)$ to each of our problems of interest.

For the above problems, and many others in linear algebra, their fastest known algorithms require $\Theta(n^\omega)$ time, where $\omega \approx 2.37$ is the current exponent of fast matrix multiplication. Our finding shows that any improvements beyond this barrier would lead to faster quantum algorithms for matrix multiplication. Our results complement existing reductions from matrix multiplication in algebraic circuits \cite{burgisser2013algebraic}, and reductions that work for standard classical algorithms, but are not tight -- i.e., which roughly show that an $O(n^{3-\delta})$ time algorithm for the problem yields an $O(n^{3-\delta/3})$ matrix multiplication algorithm \cite{williams2010subcubic}.
\end{abstract}

\section{Introduction}\label{section:intro}

For many linear algebraic problems, such as matrix inversion, linear system solving, determinant computation, and the computation of function traces like $\tr(\bv{A}^3)$ or $\tr(\bv A^{-1})$, the best known algorithms for general $n \times n$ input matrices run in $O(n^\omega)$ time, where $\omega \approx 2.37$ is the exponent of fast matrix multiplication \cite{alman2024refined}. That $O(n^\omega)$ time is an upper bound for these problems is not surprising: all of them can be readily reduced to matrix multiplication with small overhead \cite{burgisser2013algebraic,demmel2007fast}. But the question is -- why can't we do better? Why do all of these problems seem to be as hard as matrix multiplication?

\smallskip

\noindent\textbf{Reductions for Matrix-Output Problems.}
For some problems, where the output is an $n \times n$ matrix, one can readily answer this question by proving that their asymptotic complexity is equivalent to matrix multiplication via a reduction. E.g, we can observe that for any $\bv A, \bv B$, $$\small{\begin{pmatrix}
\bv I & \bv A & \bv 0 \\
\bv 0 & \bv I & \bv B \\
\bv 0 & \bv 0 & \bv I
\end{pmatrix}^{-1} = 
\begin{pmatrix}
\bv I & -\bv A & \bv A\bv B \\
\bv 0 & \bv I & -\bv B \\
\bv 0 & \bv 0 & \bv I
\end{pmatrix}}.$$
That is, by forming the $3n \times 3n$ matrix on the lefthand side and inverting it, one can read off the matrix product $\bv A \bv B$. So, any algorithm for matrix inversion running in faster than $O(n^\omega)$ time would directly yield a faster matrix multiplication algorithm. 

\smallskip

\noindent\textbf{Reductions for Scalar-Output Problems.} For problems like $\det(\bv A)$ or $\tr(\bv A^3)$, where the output is a scalar, establishing equivalence to matrix multiplication is harder. In the algebraic circuit model, one can use the Baur-Strassen theorem \cite{baur1983complexity} to show that a circuit of size $T(n)$ for these and many related problems yields an $O(T(n))$ sized matrix multiplication circuit, establishing equivalence \cite{pan1978strassen,burgisser2013algebraic}.
However, our understanding outside algebraic circuits is more limited.

Williams and Williams \cite{williams2010subcubic} show via a reduction that an $O(n^{3-\delta})$ time algorithm for $\tr(\bv A^3)$ implies an $O(n^{3-\delta/3})$ time algorithm for Boolean matrix multiplication.\footnote{The reduction of \cite{williams2010subcubic} applies even to the easier problem of {triangle detection} in an undirected graph, which is equivalent to checking if $\tr(\bv A^3) > 0$ when $\bv A$ is binary and symmetric.} One has $\delta \le 1$ since any algorithm requires  $\Omega(n^2)$ time on general input instances. Thus, their reduction at best yields an $O(n^{2.66})$ time algorithm for Boolean matrix multiplication, which is worse than the state-of-the-art $O(n^\omega)$ time.
Nevertheless, their result gives evidence of the hardness of $\tr(\bv A^3)$: any faster algorithm would yield a completely new subcubic time algorithm for matrix multiplication.

\cite{musco2017spectrum} use the fact that, for symmetric $\bv A$, $\tr(\bv A^3)$ equals the sum of cubed eigenvalues of $\bv A$ to show that the reduction of \cite{williams2010subcubic} also applies to computing (even to small approximation error) many other quantities of interest that also depend on $\bv A$'s eigenvalue spectrum, including $\det(\bv A)$, $\tr(\bv A^{-1})$, $\tr(\exp(\bv A))$, $\tr(\bv A^p)$ for $p \ge 3$, various matrix norms, and beyond.

However, all of these results inherit the looseness of the \cite{williams2010subcubic} reduction. Thus, establishing tight reductions from matrix multiplication to the multitude of scalar-output linear algebra problems whose best known complexity is $O(n^\omega)$ remains open outside the algebraic circuit model.

\subsection{Tight Quantum Reductions From Matrix Multiplication}

We observe that for many linear algebraic problems, tight reductions from matrix multiplication can be given by leveraging quantum computing. We first prove a result for $\tr(\bv A^3)$, when $\bv A$ is symmetric with bounded integer entries.\footnote{Requiring $\bv A$ to be symmetric only makes our bound stronger, as an algorithm for general $\bv A$ must work for any symmetric $\bv A$. Requiring symmetry facilitates our later reductions to other matrix functions (\Cref{cor:intro}).} A similar result holds for matrices over finite fields. Throughout, we use $\poly(x)$ to denote $x^c$ for any fixed constant $c > 0$. We use $\tilde O(\cdot)$ to suppress polylogarithmic factors in the input argument.
\begin{theorem}\label{thm:intro}
    If there exists an algorithm (classical or quantum) that, given symmetric $\bv A \in \Z^{n \times n}$ with entries bounded in magnitude by $\poly(n)$ computes $\tr(\bv A^3)$ in time $O(T(n))$, then there exists a quantum algorithm that, given $\bv A, \bv B \in \Z^{n \times n}$ with entries bounded in magnitude by $\poly(n)$, computes $\bv A \bv B$ in time $O(T(n)) + \tilde O(n^2)$.
\end{theorem}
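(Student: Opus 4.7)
The plan is a two-step reduction matching the approach sketched in the abstract: first, reduce the computation of $\bv A \bv B$ to a single quantum query of the linear functional $\bv C \mapsto \tr(\bv A \bv B \bv C)$ using a Bernstein-Vazirani-style algorithm over $\Z_M$; second, reduce this trilinear trace to $\tr(\bv M^3)$ for a symmetric block matrix $\bv M$ built from $\bv A, \bv B, \bv C$, so that one call to the hypothesized subroutine implements the oracle.

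For the second step I take the $3n \times 3n$ block matrix
$$
\bv M \eqdef \begin{pmatrix} \bv 0 & \bv A & \bv C^T \\ \bv A^T & \bv 0 & \bv B \\ \bv C & \bv B^T & \bv 0 \end{pmatrix},
$$
which is symmetric by inspection. Expanding $\tr(\bv M^3) = \sum_{I,J,K \in \{1,2,3\}} \tr(M_{IJ} M_{JK} M_{KI})$ block-wise, the zero diagonal of $\bv M$ forces $I,J,K$ to be distinct, so only the $6$ permutations of $(1,2,3)$ contribute; by cyclic invariance and the identity $\tr(X) = \tr(X^T)$, each such term equals $\tr(\bv A \bv B \bv C)$, giving $\tr(\bv M^3) = 6\,\tr(\bv A \bv B \bv C)$. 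Since the entries of $\bv M$ are bounded by $\max(\|\bv A\|_\infty, \|\bv B\|_\infty, \|\bv C\|_\infty)$, which will remain $\poly(n)$ in the reduction below, one call to the hypothesized subroutine on $\bv M$ costs $O(T(3n)) = O(T(n))$ and recovers $\tr(\bv A \bv B \bv C)$ after dividing by $6$.

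For the first step, observe that $f(\bv C) \eqdef \tr((\bv A \bv B)\bv C) = \sum_{i,j} (\bv A \bv B)_{ji}\, C_{ij}$ is a $\Z$-linear functional whose hidden coefficient vector is $\bv A \bv B$ itself. Each entry of $\bv A \bv B$ has magnitude at most $n \cdot \poly(n)^2 = \poly(n)$, so I fix a modulus $M = \poly(n)$ strictly larger than twice this bound and view $\bv A \bv B$ as an unknown vector in $\Z_M^{n^2}$. The standard generalization of Bernstein-Vazirani from $\F_2$ to the cyclic group $\Z_M$ (with Hadamards replaced by the QFT over $\Z_M$) recovers such a hidden vector from a single quantum query to an oracle $|\bv C\rangle|y\rangle \mapsto |\bv C\rangle|y + f(\bv C) \bmod M\rangle$ plus $\tilde O(n^2 \log M) = \tilde O(n^2)$ time for state preparation, QFT, and measurement. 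I implement the oracle by reversibly running the $\tr(\cdot^3)$ subroutine on $\bv M(\bv C)$ with $\bv C$ held in superposition over $\Z_M^{n \times n}$; any deterministic classical algorithm can be turned into a reversible quantum subroutine with constant-factor overhead, so this costs $O(T(n))$ per (single) call.

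The main subtlety is cleanly executing the Bernstein-Vazirani step over $\Z_M$: one must verify that (i) choosing $M > 2 \max_{\bv C} |f(\bv C)|$ makes the modular reduction injective on the integer outputs encountered, so the recovered $\bv A \bv B \in \Z_M^{n^2}$ lifts uniquely to the correct integer matrix, and (ii) the phase-kickback argument composes correctly with the reversibilized classical oracle. The rest of the argument --- bounding the entries of $\bv M(\bv C)$, the block multiplication identity, and the $\tilde O(n^2)$ QFT cost --- is routine, and summing the two contributions yields the claimed runtime of $O(T(n)) + \tilde O(n^2)$.
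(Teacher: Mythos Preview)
Your proposal is correct and follows essentially the same approach as the paper: the identical $3n \times 3n$ symmetric block matrix giving $\tr(\bv M^3) = 6\,\tr(\bv A\bv B\bv C)$, followed by Bernstein--Vazirani on the linear functional $\bv C \mapsto \tr(\bv A\bv B\bv C)$ to recover $\bv A\bv B$. The only cosmetic difference is that the paper routes through a finite field $\F_q$ (with a separate shifting lemma to handle negative entries) while you work directly over $\Z_M$; note that your condition in~(i) should read $M > 2\max_{i,j}|(\bv A\bv B)_{ij}|$---as you correctly stated a few lines earlier---rather than $M > 2\max_{\bv C}|f(\bv C)|$, which is circular since $\bv C$ ranges over $\Z_M^{n\times n}$.
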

\Cref{thm:intro} shows that the complexity of matrix multiplication and $\tr(\bv A^3)$ on quantum computers are essentially equivalent. There has been significant work on both fast and query efficient quantum algorithms for Boolean matrix multiplication in particular \cite{williams2010subcubic,le2012improved,le2012time,buhrman2004quantum,jeffery2016improving,le2014quantum}. However, to the best of our knowledge, no quantum algorithm for multiplying general matrices with Boolean, bounded integer, or finite field entries is known with runtime faster than $O(n^\omega)$. \Cref{thm:intro} can thus be viewed as a conditional lower bound: any progress on faster algorithms for $\tr(\bv A^3)$ would lead to progress on fast quantum matrix multiplication.

We can apply the reductions of \cite{musco2017spectrum} to extend \Cref{thm:intro} to  several other problems. These reductions hold even for algorithms that compute the function $f(\bv A)$ to polynomially small relative error -- i.e., that output $\tilde f$ with $\left |\frac{\tilde f - f(\bv A)}{f(\bv A)} \right | \le \frac{1}{\poly(n)}$. Considering approximation algorithms is natural as in some cases, even when $\bv A$ is integer, $f(\bv A)$ may be non-integer and even e.g., irrational. 

\begin{corollary}\label{cor:intro}
If there exists an algorithm (classical or quantum) that, given symmetric $\bv A \in \Z^{n \times n}$ with entries bounded in magnitude by $\poly(n)$ computes a $\frac{1}{\poly(n)}$ relative error approximation to any of the following functions in time $O(T(n))$, then there exists a quantum algorithm that, given $\bv A, \bv B \in \Z^{n \times n}$ with entries bounded in magnitude by $\poly(n)$, computes $\bv A \bv B$ in time $O(T(n)) + \tilde O(n^2)$.
\begin{itemize}
    \item $\det(\bv A)$, $\log\det(\bv A)$
    \item $\tr(\bv A^{-1})$ for non-singular $\bv A$, $\tr(\exp(\bv A))$, $\tr(\bv A^p)$ for integer $p \ge 3$
    \item The $p^{th}$ Schatten norm, $\|\bv A\|_p^p = \sum_{i=1}^n \sigma_i(\bv A)^p$ for any fixed $p \neq 1,2$.
    \item The SVD entropy $\sum_{i=1}^n \sigma_i(\bv A) \cdot \log\sigma_i(\bv A))$ for non-singular $\bv A$.
    \end{itemize}
    Above $\sigma_1(\bv A),\ldots, \sigma_n(\bv A) \ge 0$ are the singular values of $\bv A$.
\end{corollary}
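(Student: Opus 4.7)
The plan is to chain two reductions. First, for each of the listed functions $f$, I will show that an $O(T(n))$ time algorithm computing $f(\bv A)$ to $\frac{1}{\poly(n)}$ relative error, plus $\tilde O(n^2)$ additional work, suffices to compute $\tr(\bv A^3)$ exactly for symmetric $\bv A \in \Z^{n \times n}$ with $\poly(n)$ bounded entries. Combined with \Cref{thm:intro}, this yields the stated quantum reduction from matrix multiplication. The specific spectral-moment reductions were essentially developed in \cite{musco2017spectrum}; my job is to adapt them to the integer, relative-error setting here.

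The core idea for each function is to apply the assumed algorithm not to $\bv A$ itself but to a shifted and/or rescaled matrix $\bv M$ of the form $\bv M = s\bv I + \bv A$ or $\bv M = \bv A / s$ for an appropriate $s = \poly(n)$. Since $\bv A$ is symmetric and integer with $\poly(n)$-bounded entries, its eigenvalues $\lambda_1,\ldots,\lambda_n$ are real and bounded by $\poly(n)$, so $s$ can be chosen so that all eigenvalues of $\bv M$ lie in a region where the relevant scalar function admits a rapidly convergent Taylor expansion in $\lambda_i/s$. Summing over $i$, this yields an identity of the form
\[
    f(\bv M) \;=\; c_0 + c_1 \tr(\bv A) + c_2 \tr(\bv A^2) + c_3 \tr(\bv A^3) + \sum_{k\ge 4} c_k \tr(\bv A^k),
\]
with explicit coefficients $c_k$ depending on $f$ and $s$. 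Concretely, $\log\det(\bv M) = n\log s + \sum_{k\ge1}\frac{(-1)^{k+1}}{k s^k}\tr(\bv A^k)$, $\tr(\bv M^{-1}) = \sum_{k\ge 0}\frac{(-1)^k}{s^{k+1}}\tr(\bv A^k)$, $\tr(\exp(\bv A/s)) = \sum_{k\ge 0}\frac{1}{k! s^k}\tr(\bv A^k)$, and for integer $p$ the binomial expansion $\tr(\bv M^p) = \sum_k \binom{p}{k} s^{p-k} \tr(\bv A^k)$ terminates; the Schatten $p$-norm reduces to $\tr(\bv M^p)$ after shifting all eigenvalues positive (using symmetry, so $\sigma_i = \lambda_i$ for $\bv M$), and analogous expansions handle non-integer $p$ and the SVD entropy $\sum_i \sigma_i \log \sigma_i$ via $(s+\lambda_i)\log(s+\lambda_i)$.

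To actually extract $\tr(\bv A^3)$, I would compute $\tr(\bv A)$ and $\tr(\bv A^2) = \sum_{i,j} \bv A_{ij}^2$ exactly in $\tilde O(n^2)$ time directly from the entries of $\bv A$, invoke the assumed algorithm on $\bv M$ to obtain $\tilde f \approx f(\bv M)$ to $\frac{1}{\poly(n)}$ relative error, subtract out the known $k=0,1,2$ terms, divide by $c_3$, and round. Choosing $s = n \cdot \lambda_{\max} \le \poly(n)$ guarantees $|\lambda_i/s| \le 1/n$, so the tail $\sum_{k\ge 4} c_k \tr(\bv A^k)$ is geometrically smaller than the $k=3$ term by a factor $O(1/n)$. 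Since $\tr(\bv A^3)$ is an integer of magnitude at most $\poly(n)$, picking the relative-error exponent large enough makes the combined truncation-plus-approximation error bounded by $1/2$, so integer rounding is exact. All overhead outside the single call to the $f$-algorithm is $\tilde O(n^2)$.

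The main obstacle is doing the precision bookkeeping uniformly across the list: I need to check, function by function, that a single polynomial choice of shift/scale $s$ simultaneously (i) keeps $f(\bv M)$ bounded by $\poly(n)$ in magnitude, so that $\frac{1}{\poly(n)}$ relative error translates into $\frac{1}{\poly(n)}$ absolute error, (ii) makes $c_3 \ne 0$ with $|c_3|$ not too small, and (iii) makes the higher-moment tail negligible. The trickiest cases are $\tr(\exp(\bv A))$, which is better handled by scaling down rather than shifting, and the SVD entropy and non-integer Schatten norms, where the Taylor expansion of $(s+x)^p$ or $(s+x)\log(s+x)$ around $x=0$ has coefficients that must be checked not to collapse the $\tr(\bv A^3)$ contribution. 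In every case, though, $\poly(n)$ entry bounds imply $\poly(n)$ eigenvalue bounds, and a polynomial choice of $s$ controls all three quantities, so the reduction goes through and \Cref{thm:intro} completes the argument.
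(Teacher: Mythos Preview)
Your approach matches the paper's: reduce matrix multiplication to $\tr(\bv A^3)$ via \Cref{thm:intro}, then reduce $\tr(\bv A^3)$ to each listed $f$ via the spectral-shift arguments of \cite{musco2017spectrum}, which the paper simply cites as \Cref{prop:musco} rather than unpacking. One caveat in your sketch: condition (i) as you state it---that $f(\bv M)$ be $\poly(n)$-bounded so that relative error becomes $1/\poly(n)$ absolute error---fails for $\det(\bv M)$, which is typically exponential in $n$; the workaround (which the paper also notes) is that $1/\poly(n)$ relative error on $\det$ gives $1/\poly(n)$ additive error on $\log\det$, so the $\det$ case routes through your $\log\det$ expansion rather than satisfying (i) directly.
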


\subsection{Proof Summary}

\Cref{thm:intro} follows from an application of the powerful Bernstein-Vazirani theorem \cite{bernstein1993quantum}, which shows that, given black-box access to a linear function $f(\bv x) = \langle \bv s, \bv x \rangle$, one can recover the coefficient vector $\bv s$ using a single quantum query to the function with an equal superposition of all inputs.

First, we observe that the problem of computing $\tr(\bv A \bv B \bv C)$ given $n \times n$ integer matrices $\bv A$, $\bv B$, and $\bv C$ can be reduced to computing $\tr(\bv{A}^3)$ with just a constant factor overhead. In particular, one can check that $\tr(\bv A \bv B \bv C) = \frac{1}{6} \tr(\bv M^3)$  for the $3n \times 3n$ symmetric matrix 
\begin{equation*}
    \small
\bv M = 
\begin{pmatrix}
0 & \bv A & \bv C^T \\
\bv A^T & 0 & \bv B \\
\bv C & \bv B^T & 0
\end{pmatrix}.
\end{equation*}

With this observation in hand, we turn our attention to reducing matrix multiplication to computing $\tr(\bv A \bv B \bv C)$. We can write $\tr(\bv A\bv B \bv C) = \sum_{i,j} (\bv A \bv B)_{ij} \bv C_{ji}$ as a linear function in the entries of $\bv C$. In the algebraic circuit model, this immediately yields a reduction from matrix multiplication: given a circuit of size $T$ for computing $\tr(\bv A \bv B \bv C)$, by the Baur-Strassen theorem \cite{pan1978strassen,baur1983complexity}, we can construct a circuit of size $O(T)$ that computes all partial derivatives of this function as well. Since $\frac{\partial \tr(\bv A \bv B \bv C)}{\partial \bv C_{ji}} = (\bv A \bv B)_{ij}$, this circuit thus directly outputs all entries of the matrix product $\bv A \bv B$.

For a quantum reduction, we will use exactly the same idea, except instead of applying Baur-Strassen, we apply the Bernstein-Vazirani theorem. That is, we think of $\bv A$ and $\bv B$ as fixed and use a single query to the linear function $f(\bv C) = \tr(\bv A \bv B \bv C)$ (for which we have assumed we have an efficient algorithm) with an equal superposition over all possible $\bv C$. From this query, we can recover all coefficients of the function -- i.e., all entries of $\bv A \bv B$. Accounting for the $\tilde O(n^2)$ cost overhead of implementing this method, including preparing the equal superposition of all inputs, yields \Cref{thm:intro}. Using the techniques of \cite{musco2017spectrum} we can reduce computing $\tr(\bv A^3)$, and in turn $\tr(\bv A \bv B \bv C)$ to many other scalar-output linear algebra problems, yielding \Cref{cor:intro}.




\section{Preliminaries}\label{sec:prelim}

\subsection{Notation}\label{sec:notation}

Let $\F_q$ denote the finite field containing $q$ elements. Let $\Z$ denote the integers. We represent matrices with bold uppercase letters -- e.g., $\bv A$. $\bv A_{ij}$ denotes the $(i,j)$ entry of $\bv A$. 
 We denote the inner product over vectors $\bv x,\bv y$ in either $\F_q^n$ or $\mathbb{Z}^n$ as $\langle \bv x,\bv y \rangle=\sum_{i=1}^n \bv x_i \bv y_i$. 
 Recall that for $n \times n$ matrices $\bv X$ and $\bv Y$ the natural inner product is given by $\langle \bv X, \bv Y \rangle = \tr(\bv X \bv Y^T) = \sum_{i=1}^n \sum_{j=1}^n \bv X_{ij} \bv Y_{ij}$. Plugging in $\bv X = \bv A \bv B$ and $\bv Y = \bv C^T$, we have $\tr(\bv A \bv B \bv C) = \sum_{i=1}^n \sum_{j=1}^n (\bv A \bv B)_{ij} \bv C_{ji}.$ 

\subsection{Computational Model and Representation of Numerical Values}

Our results use the Bernstein-Vazirani algorithm to reduce matrix multiplication in a blackbox way to computing $\tr(\bv A \bv B \bv C)$. All other reductions, from $\tr(\bv A \bv B \bv C)$ to $\tr(\bv A^3)$ and other linear algebraic problems are classical. Thus, we require only that, given an underlying algorithm for computing a function with runtime $T(n)$, we can implement a \emph{phase oracle}  (see \Cref{app:quantum} for a formal definition) for computing that function on a quantum computer with cost $O(T(n))$. If the underlying algorithm is implemented in a standard random-access machine, this means that we need access to a quantum random-access memory with unit-cost read/write operations, along with standard quantum gates that can be run at unit cost. Prior work on quantum algorithms for linear algebraic computation uses similar models \cite{le2012time,chen2025quantum,jeffery2016improving}.

For simplicity, we focus on integer input matrices with entries bounded in magnitude by $n^c$ for any fixed $c > 0$, or entries in $\F_q$ with $q$ bounded by $n^c$. The entries of these matrices and their products can be represented exactly using $O(\log(n))$ bits -- note that if $\bv A, \bv B \in \Z^{n \times n}$ have entries bounded in magnitude by $n^c$, $\bv A\bv B$ has entries bounded in magnitude by $n^{2c+1}$. Arithmetic operations (e.g., multiplication, addition) on these entries can be performed in $\tilde O(\log(n))$ time.

Our result in \Cref{cor:intro} involves functions like $\tr(\bv A^{-1})$, $\log \det(\bv A)$, and $\det(\bv A)$  that do not necessarily take integer values, and/or may take values exponentially large in $n$. Nevertheless, we can represent a $\frac{1}{\poly(n)}$ relative error approximation (as required by \Cref{cor:intro}) to any of these quantities using a floating point representation with an $O(\log n)$ bit significand and an $O(\log n$ bit exponent. We can again perform arithmetic operations on such representations in $\tilde O(\log n)$ time, which is all our reductions require. 

\subsection{The Bernstein-Vazirani Theorem} 

The key quantum primitive that we use is the Bernstein-Vazirani theorem \cite{bernstein1993quantum}, which allows recovering the coefficients of a linear function with just a single quantum query to the function. Traditionally, the theorem is stated for functions over $\F_2^n$. We use a simple generalization to $\F_q^n$, which we prove for completeness in \Cref{app:quantum}. Our results for bounded integer matrices follow by working over a sufficiently large finite field to represent the range of possible integer outputs.

\begin{theorem}[Bernstein-Vazirani Theorem]\label{thm:ExtendedBV}
For any $f:\mathbb{F}_q^n\rightarrow \mathbb{F}_q$ of the form $f(\bv x)=\langle \bv s, \bv x \rangle$, where $\bv s\in\mathbb{F}_q^n$ is a fixed vector, there exists a quantum algorithm that outputs $\bv s$ with probability $1$ using just a single call to an algorithm computing $f$ and $\tilde O(n \log q)$ additional time.
\end{theorem}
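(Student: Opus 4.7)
The plan is to run the standard Bernstein--Vazirani algorithm with the Hadamard transform (the QFT over $\mathbb{F}_2$) replaced by the quantum Fourier transform over the additive group of $\mathbb{F}_q$. Write $q = p^k$ with $p$ prime, let $\mathrm{Tr} : \mathbb{F}_q \to \mathbb{F}_p$ denote the field trace, and set $\omega = e^{2\pi i/p}$. On a single qudit of dimension $q$ with basis $\{|x\rangle\}_{x \in \mathbb{F}_q}$, define
\[
F_q |x\rangle \;\eqdef\; \frac{1}{\sqrt{q}} \sum_{y \in \mathbb{F}_q} \omega^{\mathrm{Tr}(xy)} |y\rangle.
\]
When $q$ is prime, $\mathrm{Tr}$ is the identity and this is the usual QFT over $\mathbb{Z}/q\mathbb{Z}$. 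In general $F_q$ is unitary by orthogonality of the additive characters $\chi_a(x) = \omega^{\mathrm{Tr}(ax)}$ of $\mathbb{F}_q$, and decomposes as a tensor product of $k$ QFTs over $\mathbb{Z}/p\mathbb{Z}$ after fixing a self-dual basis of $\mathbb{F}_q$ over $\mathbb{F}_p$.

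The algorithm is: (i) prepare $n$ qudits in $|0\rangle^{\otimes n}$ and apply $F_q^{\otimes n}$ to obtain the uniform superposition $q^{-n/2}\sum_{\bv x \in \mathbb{F}_q^n}|\bv x\rangle$; (ii) invoke the phase oracle $O_f : |\bv x\rangle \mapsto \omega^{\mathrm{Tr}(f(\bv x))} |\bv x\rangle$, implemented by phase kickback from a single invocation of the assumed algorithm for $f$; (iii) apply $(F_q^{-1})^{\otimes n}$ and measure in the computational basis. Correctness is a one-line calculation: since $f(\bv x) = \sum_i s_i x_i$ and $\mathrm{Tr}$ is $\mathbb{F}_p$-linear, after step (ii) the state factorizes as
\[
\frac{1}{\sqrt{q^n}} \sum_{\bv x \in \mathbb{F}_q^n} \omega^{\mathrm{Tr}\left(\sum_i s_i x_i\right)} |\bv x\rangle \;=\; \bigotimes_{i=1}^n \left(\frac{1}{\sqrt{q}} \sum_{x_i \in \mathbb{F}_q} \omega^{\mathrm{Tr}(s_i x_i)}|x_i\rangle\right) \;=\; \bigotimes_{i=1}^n F_q |s_i\rangle,
\]
so step (iii) yields $|\bv s\rangle$ exactly and measurement returns $\bv s$ with probability $1$.

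For the runtime, one qudit $F_q$ is a tensor product of $k$ QFTs over $\mathbb{Z}/p\mathbb{Z}$, each implementable (exactly, or to negligible error for our purposes) in $\tilde O(\log p)$ time, so the two QFT layers contribute $\tilde O(nk \log p) = \tilde O(n\log q)$ gates. Converting an algorithm computing $f(\bv x) \in \mathbb{F}_q$ into $O_f$ uses the standard phase-kickback gadget: prepare an ancilla qudit in $F_q |1\rangle$, apply the reversible adder $|\bv x\rangle|y\rangle \mapsto |\bv x\rangle|y + f(\bv x)\rangle$ (one query to the given algorithm for $f$), and the ancilla is returned unchanged while $|\bv x\rangle$ picks up the desired phase $\omega^{\mathrm{Tr}(f(\bv x))}$; the total overhead is one call to $f$ and $\tilde O(\log q)$ gates.

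The main thing to watch is the prime-power case, where one has to verify that $F_q$ is unitary (orthogonality of characters of $(\mathbb{F}_q,+)$), that $\mathrm{Tr}$ passes through the sum so the state factorizes across registers, and that $\mathrm{Tr}$ can be computed in $\tilde O(\log q)$ time from a standard representation of $\mathbb{F}_q$. For the applications in this paper it is actually sufficient to handle the prime case alone, since one can pick a prime $q$ larger than the largest possible entry of $\bv A\bv B$ with $q = \poly(n)$ and reduce the integer problem to the $\mathbb{F}_q$ problem; in the prime case $\mathrm{Tr}$ is the identity and all three items above are immediate.
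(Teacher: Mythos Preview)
Your proof is correct and follows essentially the same route as the paper: prepare the uniform superposition via a coordinate-wise QFT over $\mathbb{F}_q$, apply the phase oracle for $f$, observe that the resulting state factorizes as $\bigotimes_i F_q|s_i\rangle$, and invert the QFT to recover $\ket{\bv s}$. The only substantive difference is that you treat the prime-power case $q=p^k$ carefully via the field trace and characters $\chi_a(x)=\omega^{\mathrm{Tr}(ax)}$, whereas the paper's appendix writes the QFT with phases $\omega_q^{jk}$ for $\omega_q=e^{2\pi i/q}$, which is literally the Fourier transform over $\mathbb{Z}/q\mathbb{Z}$ and only matches the additive characters of $\mathbb{F}_q$ when $q$ is prime; your version is therefore the more rigorous one for general $q$, and as you note, the prime case already suffices for the downstream applications.
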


As a direct consequence of \Cref{thm:ExtendedBV}, we have the following reduction from matrix multiplication over $\F_q$ to computing the trilinear form $\tr(\bv A \bv B \bv C)$, 
which can be viewed as the function $f(\bv C) = \sum_{i=1}^n \sum_{j=1}^n (\bv A \bv B)_{ij} \bv C_{ji} = \langle \bv A \bv B, \bv C^T\rangle$. This result mirrors what can be shown in the algebraic circuit model using e.g., the Baur-Strassen theorem \cite{pan1978strassen,baur1983complexity}.
\begin{theorem}[Matrix Mult. to Trilinear Form]\label{theorem:QuantTrace}
If there exists an algorithm (classical or quantum) that, given $\bv A, \bv B, \bv C \in \F_q^{n \times n}$ for $q = \poly(n)$ computes $\tr(\bv A\bv B\bv C)$ in $T(n)$ time, then there exists a quantum algorithm that given $\bv A,\bv B \in \mathbb{F}_q^{n \times n}$, computes $\bv A\bv B$ in $T(n) + \tilde O(n^2)$ time.
\begin{proof}
    We directly apply \Cref{thm:ExtendedBV} where we `flatten' $\bv A \bv B$ and $\bv C$ and represent them as vectors $\bv s$ and $\bv x$ respectively in $\F_q^{n^2}$. A single call to our algorithm for computing $f(\bv C) = \tr(\bv A \bv B \bv C) = \langle \bv s, \bv x \rangle$ allows us to recover $\bv s \in \F_q^{n^2}$, which we can then rearrange into the matrix product $\bv A \bv B \in \F_q^{n \times n}$. 
\end{proof}

\end{theorem}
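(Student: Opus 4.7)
The plan is to view $\tr(\bv A\bv B\bv C)$, with $\bv A$ and $\bv B$ fixed, as a linear function of the $n^2$ entries of $\bv C$ over $\F_q$, and then apply the $\F_q$-generalization of Bernstein--Vazirani (\Cref{thm:ExtendedBV}) to read off all coefficients of this linear function in a single oracle call. Since the coefficient vector of this linear function is precisely $\bv A\bv B$ (up to transposition/flattening), one query suffices to recover the product.

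First, I would rewrite the target quantity in the explicit linear form noted in \Cref{sec:notation}: using $\tr(\bv X\bv Y^T)=\langle \bv X,\bv Y\rangle$ with $\bv X=\bv A\bv B$ and $\bv Y=\bv C^T$, we get
\[
\tr(\bv A\bv B\bv C)=\langle \bv A\bv B,\bv C^T\rangle=\sum_{i,j}(\bv A\bv B)_{ij}\,\bv C_{ji}.
\]
Next I would flatten both matrices into $n^2$-dimensional vectors: set $\bv s\in\F_q^{n^2}$ to be the row-major vectorization of $\bv A\bv B$ and $\bv x\in\F_q^{n^2}$ to be the vectorization of $\bv C^T$ (indexed consistently so the dot product lines up). Then the function $f(\bv C)=\tr(\bv A\bv B\bv C)$ is exactly the linear form $f(\bv x)=\langle \bv s,\bv x\rangle$ on $\F_q^{n^2}$, where $\bv s$ is fixed but unknown (it depends on $\bv A,\bv B$, which are given as input but which we have not yet multiplied).

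Now I would invoke \Cref{thm:ExtendedBV} on $f$, treating the hypothesized $T(n)$-time algorithm for $\tr(\bv A\bv B\bv C)$ as the oracle computing $f$. This yields $\bv s$ with probability $1$ using a single call to the $\tr$ algorithm plus $\tilde O(n^2\log q)=\tilde O(n^2)$ additional quantum time (since $q=\poly(n)$, so $\log q=O(\log n)$). Finally I would un-flatten $\bv s$ into the $n\times n$ matrix $\bv A\bv B$, which is a free $O(n^2)$ bookkeeping step. The total cost is $T(n)+\tilde O(n^2)$, as claimed.

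There is essentially no obstacle beyond making the indexing line up: the main work has already been packaged into \Cref{thm:ExtendedBV}, and the observation that $\tr(\bv A\bv B\bv C)$ is linear in $\bv C$ is immediate. The only subtlety worth flagging in the write-up is ensuring that implementing the $f$-oracle as a phase oracle from a classical/quantum time-$T(n)$ algorithm only incurs $O(T(n))$ overhead, which is exactly the computational-model assumption spelled out in the Preliminaries.
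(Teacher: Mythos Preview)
Your proposal is correct and follows exactly the same approach as the paper: flatten $\bv A\bv B$ and $\bv C$ into vectors in $\F_q^{n^2}$, recognize $\tr(\bv A\bv B\bv C)=\langle \bv s,\bv x\rangle$, and apply \Cref{thm:ExtendedBV} once to recover $\bv s=\bv A\bv B$. Your write-up is in fact a bit more explicit about the indexing and the $\tilde O(n^2\log q)=\tilde O(n^2)$ accounting, but the argument is identical.
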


\section{Main Results}\label{sec:main}

We now prove \Cref{thm:intro} and \Cref{cor:intro}. We start by stating several straightforward non-quantum reductions in \Cref{sec:classical} -- from $\tr(\bv A \bv B \bv C)$ to $\tr(\bv A^3)$ and in turn to several scalar-output linear algebra problems. We then prove the main theorems in \Cref{sec:mainProofs}

\subsection{Classical Linear Algebraic Reductions}\label{sec:classical}

We first show that the problem of computing $\tr(\bv A \bv B \bv C)$ can be reduced to computing $\tr(\bv M^3)$ for a constant sized larger symmetric matrix $\bv M$.

\begin{proposition}[Trilinear Form to $\tr(\bv A^3)$]\label{prop:trace}
If there exists a $T(n)$ time algorithm that given any symmetric $\bv M \in \F_q^{n \times n}$ for $q = \poly(n)$ outputs $\tr(\bv M^3)$, then there exists a $T(3n) + \tilde O(n^2)$ time algorithm that, given any $\bv A,\bv B,\bv C \in \F_q^{n \times n}$
outputs $\tr(\bv A\bv B\bv C)$.
\end{proposition}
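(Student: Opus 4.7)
The plan is to use the explicit $3n \times 3n$ symmetric matrix $\bv M$ given in the proof sketch of \Cref{thm:intro}, namely
\begin{equation*}
\bv M = \begin{pmatrix} 0 & \bv A & \bv C^T \\ \bv A^T & 0 & \bv B \\ \bv C & \bv B^T & 0 \end{pmatrix},
\end{equation*}
and to establish the identity $\tr(\bv M^3) = 6\,\tr(\bv A\bv B\bv C)$. Given this identity, the reduction is immediate: construct $\bv M$ from $\bv A,\bv B,\bv C$, invoke the assumed $T(3n)$-time algorithm on the symmetric input $\bv M$, and output the result divided by $6$.

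To verify the identity, I would expand $\bv M^3$ blockwise and inspect only the three diagonal $n \times n$ blocks (off-diagonal blocks do not contribute to the trace). Writing $\bv M^2$ in $3\times 3$ block form and then multiplying once more by $\bv M$, the $(1,1)$ diagonal block of $\bv M^3$ evaluates to $\bv A\bv B\bv C + \bv C^T\bv B^T\bv A^T = \bv A\bv B\bv C + (\bv A\bv B\bv C)^T$; by cyclic invariance of trace, its trace is $2\,\tr(\bv A\bv B\bv C)$. By the symmetric structure of the construction (the three cyclic rotations $\bv A,\bv B,\bv C \mapsto \bv B,\bv C,\bv A \mapsto \bv C,\bv A,\bv B$ appear along the block diagonal of $\bv M^3$), the $(2,2)$ and $(3,3)$ diagonal blocks contribute $2\,\tr(\bv B\bv C\bv A)$ and $2\,\tr(\bv C\bv A\bv B)$ respectively, each equal to $2\,\tr(\bv A\bv B\bv C)$ by cyclicity. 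Summing gives $\tr(\bv M^3) = 6\,\tr(\bv A\bv B\bv C)$ as claimed.

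For the runtime accounting, constructing $\bv M$ is simply copying the $O(n^2)$ entries of $\bv A,\bv B,\bv C$ (together with their transposes) into a larger array, which takes $\tilde O(n^2)$ time in the computational model of \Cref{sec:prelim}. Running the assumed algorithm on the $3n \times 3n$ symmetric matrix $\bv M$ costs $T(3n)$. Finally, the division by $6$ is a single $\tilde O(\log q)$-time field operation, absorbed into the $\tilde O(n^2)$ overhead.

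The only subtlety I expect is the division by $6$ in $\F_q$: this requires $\gcd(6,q) = 1$, i.e., $\text{char}(\F_q) \notin \{2,3\}$. This is not a real obstacle for the downstream application to \Cref{thm:intro}, where one is free to choose $q$ to be a sufficiently large prime to represent the bounded integer entries; one simply avoids the two excluded characteristics. If a fully characteristic-agnostic statement is desired, one could instead use a slightly different symmetric embedding (e.g., placing identity multiples to pick out $\tr(\bv A\bv B\bv C)$ with a unit coefficient), but the above suffices in the parameter regime we need.
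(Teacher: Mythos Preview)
Your proposal is correct and follows essentially the same approach as the paper: both use the identical $3n\times 3n$ symmetric block matrix $\bv M$, expand $\bv M^3$ blockwise to verify $\tr(\bv M^3)=6\,\tr(\bv A\bv B\bv C)$ via cyclicity and transpose-invariance of trace, and account for the $\tilde O(n^2)$ construction cost plus $T(3n)$ call. Your added remark about division by $6$ requiring $\mathrm{char}(\F_q)\notin\{2,3\}$ is a valid caveat that the paper does not explicitly address.
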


\begin{proof}

Given $\bv A$, $\bv B$, $\bv C \in \F_q^{n \times n}$, we first form the $3n \times 3n$ symmetric matrix: 
\begin{equation*}\small
\bv M = 
\begin{pmatrix}
0 & \bv A & \bv C^T \\
\bv A^T & 0 & \bv B \\
\bv C & \bv B^T & 0
\end{pmatrix}.
\end{equation*}
This takes $\tilde O(n^2)$ time, accounting for the $O(\log n)$ bit complexity of representing our matrix entries. The cube of $\bv M$ is equal to:

\begin{align*}\small
\bv M^3 = 
\begin{pmatrix}
0 & \bv A & \bv C^T \\
\bv A^T & 0 & \bv B \\
\bv C & \bv B^T & 0 \\
\end{pmatrix}^3 &= 
\begin{pmatrix}
\bv{AA}^T + \bv {C}^T \bv C & \bv C^T \bv B^T& \bv A\bv B \\
\bv B\bv C & \bv{A}^T \bv A + \bv {BB}^T & \bv A^T \bv C^T \\
\bv B^T \bv A^T & \bv C\bv A & \bv{CC}^T + \bv{B}^T\bv B \\
\end{pmatrix}
\begin{pmatrix}
0 & \bv A & \bv C^T \\
\bv A^T & 0 & \bv B \\
\bv C & \bv B^T & 0 \\
\end{pmatrix}
\\ &=
\begin{pmatrix}
\bv C^T \bv B^T \bv A^T + \bv A\bv B\bv C & \_ & \_ \\
\_ & \bv B\bv C\bv A + \bv A^T \bv C^T \bv B^T & \_ \\
\_ & \_ & \bv B^T \bv A^T \bv C^T + \bv C\bv A\bv B \\
\end{pmatrix},
\end{align*}
where in the last line, we do not calculate the off-diagonal blocks as they do not impact the trace. 
From the cyclic property of trace, we have that $\tr(\bv A\bv B\bv C) = \tr(\bv B\bv C\bv A) = \tr(\bv C\bv A\bv B)$. Further, since the trace is linear and since taking the transpose of a block does not change its trace, we have $\tr(\bv M^3) = 6\cdot \tr(\bv A\bv B\bv C)$. Thus, since we can compute $\tr(\bv M^3)$ in $T(3n)$ time, we can compute $\tr(\bv A \bv B \bv C)$ in $T(3n) + \tilde O(n^2)$ time as desired. 
\end{proof}

We also state the following basic reduction, which allows us to reduce matrix multiplication over bounded integer matrices to multiplication over a sufficiently large finite field $\F_q$. Note that a reduction in the opposite direction (reducing finite field matrix multiplication to integer matrix multiplication) follows from the properties of modular arithmetic: one can simply take the output modulo $q$ at the end of the computation.

\begin{proposition}[Integer Multiplication via Finite Field Multiplication]\label{prop:convertFFtoInt}
If there exists an algorithm that given any $\bv A, \bv B \in \mathbb{F}_q^{n \times n}$ for $q = \poly(n)$, computes $\bv A\bv B$ in time $O(T(n))$ time, then there exists an algorithm that,  given $\bv A, \bv B \in \mathbb{Z}^{n \times n}$, with entries bounded in magnitude by $\poly(n)$, computes $\bv A\bv B$ in time $O(T(n)) + \tilde O(n^2)$.
\end{proposition}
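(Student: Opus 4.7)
The plan is to pick a prime modulus $q = \poly(n)$ that is large enough so that every entry of the true integer product $\bv A \bv B$ is uniquely determined by its residue mod $q$, perform the multiplication over $\F_q$ using the assumed algorithm, and then lift the result back to $\Z$ entrywise. Since the assumed algorithm only runs for a single choice of $q$ (the $q=\poly(n)$ under which it operates), we need to choose $q$ appropriately for the input bound.

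First I would bound the output: if $\bv A, \bv B \in \Z^{n\times n}$ have entries of magnitude at most $n^c$, then each entry of $\bv A\bv B$ has magnitude at most $n \cdot n^{2c} = n^{2c+1}$. I would then fix any prime $q$ with $q > 2 n^{2c+1} + 1$ and $q = \poly(n)$; such a prime exists by Bertrand's postulate and can be located in $\poly\log(n) = \tilde O(1)$ time (even by trial division, since $q = \poly(n)$ means primality testing runs in polylogarithmic time in the bit-length). Because $q = \poly(n)$, a single element of $\F_q$ is represented in $O(\log n)$ bits, matching the computational model set up in \Cref{sec:prelim}.

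Next I would reduce the input matrices modulo $q$. Since the entries of $\bv A, \bv B$ already have magnitude $\poly(n)$, this is one modular reduction per entry, costing $\tilde O(n^2)$ time total. I would then invoke the hypothesized $O(T(n))$-time algorithm on these $\F_q$ matrices to obtain $\bv A \bv B \bmod q$. Finally, I would lift each entry $r \in \{0,1,\ldots,q-1\}$ back to $\Z$ by mapping $r \mapsto r$ if $r \le (q-1)/2$ and $r \mapsto r - q$ otherwise; this entrywise pass again costs $\tilde O(n^2)$. Correctness of the lift follows because every true entry of $\bv A \bv B$ lies in $[-n^{2c+1}, n^{2c+1}] \subset [-(q-1)/2,(q-1)/2]$, so reduction mod $q$ followed by the symmetric lift recovers it exactly.

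There is really no substantive obstacle here; the only subtlety is ensuring that the prime $q$ still satisfies $q = \poly(n)$ while strictly exceeding $2 n^{2c+1}$, which is immediate since $c$ is a fixed constant. The total runtime is $O(T(n)) + \tilde O(n^2)$, as claimed.
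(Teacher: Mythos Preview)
Your argument is correct and follows the same core idea as the paper: choose a modulus $q=\poly(n)$ large enough that every entry of the integer product is uniquely determined by its residue, invoke the assumed $\F_q$ algorithm, and recover the integer entries in $\tilde O(n^2)$ time. The only real difference is in how signs are handled: the paper first shifts $\bv A,\bv B$ to nonnegative matrices via $\bv A' = \bv A + w\bv 1$, $\bv B' = \bv B + w\bv 1$, multiplies, and then subtracts the cross terms $w(\bv A\bv 1)$, $w(\bv 1\bv B)$, $w^2\bv 1^2$ (each computable in $\tilde O(n^2)$), whereas you keep the signed entries and use the centered lift $r\mapsto r$ or $r-q$ at the end. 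Your route is slightly cleaner and avoids the extra bookkeeping; the paper's shift trick has the minor advantage that it never needs to discuss representing negative residues. One small quibble: trial division on a candidate of size $\poly(n)$ is not $\tilde O(1)$, but since any prime-finding cost here is easily absorbed into the $\tilde O(n^2)$ overhead, this does not affect the claimed bound.
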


\begin{proof}

Assume that $\bv A$ and $\bv B$ are nonnegative with entries bounded in magnitude by $w = \poly(n)$. Fix $q = 2^k$ such that $w^2 n \le q \le 2 w^2 n$ so that $\F_q$ is a finite field. One can check that the entries of $\bv A \bv B$ are bounded by $w^2 n$. Thus, $\bv A \bv B \mod q = \bv A \bv B$. So, if we simply let $\bv A_q,\bv B_q \in \F_q^{n \times n}$ represent $\bv A$ and $\bv B$ with their entries interpreted as elements in $\F_q$, we can compute $\bv A_q \bv B_q = \bv A \bv B \mod q = \bv A \bv B$ in $O(T(n))$ time, since $q = O(w^2n) = \poly(n)$. This gives the result in this case.

To handle $\bv A, \bv B \in \Z^{n \times n}$ with negative entries, we can simply shift them. Let $\bv 1$ be the $n \times n$ all ones matrix. Let $\bv{A'} = \bv A + w\bv1$ and $\bv {B'} = \bv {B} + w \bv 1$. Since $\bv A'$ and $\bv B'$ are nonnegative with entries bounded in magnitude by $2w = \poly(n)$, we can multiply them in time $O(T(n))$ using the reduction above. We can then extract the matrix product $\bv A \bv B$ using the formula:
\begin{align*}
\bv A'\bv B' &= \bv A \bv B + w(\bv A \bv 1) + w(\bv B \bv 1) + w^2 \bv 1^2\\
\bv A \bv B &= \bv A' \bv B' - w(\bv A \bv 1) - w(\bv B \bv 1) - w^2 \bv 1^2.
\end{align*}

We can compute $w (\bv A \bv 1)$, $(w \bv B \bv 1)$ and $w^2 \bv 1^2 $ all in $\tilde O(n^2)$ time since all columns of $\bv 1$ are identical to each other. Thus, from the matrix product $\bv A'\bv B'$ we can extract $\bv A \bv B$ in $\tilde O(n^2)$ time. The total runtime is $O(T(n)) + \tilde O(n^2)$, giving the proposition
\end{proof}

Finally, we note that we can reduce computing $\tr(\bv A^3)$ for symmetric $\bv A$ (and in turn $\tr(\bv A \bv B \bv C)$ by \Cref{prop:trace}) to many scalar output linear algebra problems, using the results of \cite{musco2017spectrum}.

\begin{theorem}[Adapted from \cite{musco2017spectrum}]\label{prop:musco}
If there exists an algorithm that, given symmetric $\bv A \in \Z^{n \times n}$ with entries bounded in magnitude by $\poly(n)$ computes a $\frac{1}{\poly(n)}$ relative error approximation to any of the following functions in time $O(T(n))$, then there exists an algorithm that, given $\bv A, \bv B \in \Z^{n \times n}$ with entries bounded in magnitude by $\poly(n)$, computes $\bv A \bv B$ in time $O(T(n)) + \tilde O(n^2)$.
\begin{itemize}
    \item $\det(\bv A)$, $\log\det(\bv A)$
    \item $\tr(\bv A^{-1})$ for non-singular $\bv A$, $\tr(\exp(\bv A))$, $\tr(\bv A^p)$ for integer $p \ge 3$
    \item The $p^{th}$ Schatten norm, $\|\bv A\|_p^p = \sum_{i=1}^n \sigma_i(\bv A)^p$ for any fixed $p \neq 1,2$.
    \item The SVD entropy $\sum_{i=1}^n \sigma_i(\bv A) \cdot \log\sigma_i(\bv A))$ for non-singular $\bv A$.
    \end{itemize}
    Above $\sigma_1(\bv A),\ldots, \sigma_n(\bv A) \ge 0$ are the singular values of $\bv A$.
\end{theorem}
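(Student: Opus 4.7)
The plan is to invoke \Cref{thm:intro} as a black box and reduce the computation of $\tr(\bv A^3)$ on a symmetric integer matrix $\bv A$ to the computation of one of the listed scalar functions $f$. The reduction in that direction is the main content of \cite{musco2017spectrum}, which I would restate here. Concretely, the goal is to produce, from any $\frac{1}{\poly(n)}$ relative error algorithm for $f$ on symmetric integer matrices with $\poly(n)$-bounded entries running in time $O(T(n))$, an algorithm that computes $\tr(\bv A^3)$ exactly for any such $\bv A$ in time $O(T(n)) + \tilde O(n^2)$. Composing this with \Cref{thm:intro} then yields the claimed $O(T(n)) + \tilde O(n^2)$ time quantum algorithm for matrix multiplication.

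The mechanism behind the \cite{musco2017spectrum} reductions is the following. Each of the target functions depends only on the eigenvalues (or singular values) of $\bv A$. By evaluating $f$ on a parameterized family such as $c \bv I - \bv A$ for a constant number of well-chosen values of $c$, one obtains quantities whose series expansions in $1/c$ have the power sums $p_k = \tr(\bv A^k)$ as coefficients. A small linear system over these evaluations then isolates $p_3 = \tr(\bv A^3)$. For example, $\tr((c \bv I - \bv A)^{-1}) = \sum_{k \ge 0} \tr(\bv A^k)/c^{k+1}$ and $-\log\det(c \bv I - \bv A) + n \log c = \sum_{k \ge 1} \tr(\bv A^k)/(k c^k)$; both expose $\tr(\bv A^3)$ once the lower power sums $p_0, p_1, p_2$ are canceled out using additional evaluations at different $c$. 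Analogous parameterized identities handle $\det(\bv A)$ via $\log\det$, $\tr(\exp(\bv A))$ via its Taylor expansion, $\tr(\bv A^p)$ for $p \ge 3$, the Schatten $p$-norms (obtained from expansions of $\tr((\bv A + c \bv I)^p)$ for sufficiently large shifts $c$ that make all eigenvalues positive), and the SVD entropy by similar shift-and-expand arguments. Each case requires only $O(1)$ calls to the assumed algorithm for $f$.

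The main obstacle, and the only place where I would spend real care, is the precision analysis. Since the entries of $\bv A$ are bounded by $\poly(n)$, the eigenvalues of $\bv A$ lie in a $\poly(n)$-size range and $\tr(\bv A^3)$ is an integer of magnitude at most $\poly(n)$. Exact recovery is therefore equivalent to achieving additive error strictly below $1/2$. The coefficients used in the interpolation step that extracts $p_3$ can blow up the error by at most a fixed polynomial factor in $n$, so demanding $\frac{1}{\poly(n)}$ relative error from the assumed algorithm is sufficient, provided the polynomial degree in the error guarantee is chosen large enough. This precision analysis is exactly what \cite{musco2017spectrum} perform, and the assumptions here (symmetric, integer, $\poly(n)$-bounded entries) match theirs, so the argument transfers directly. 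All pre- and post-processing steps -- constructing the shifted matrices, solving the small interpolation system, and rounding -- take $\tilde O(n^2)$ time, which is absorbed into the claimed additive overhead.
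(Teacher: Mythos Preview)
Your proposal is correct but differs from the paper's proof in two ways. First, you invoke \Cref{thm:intro} to pass from $\bv A\bv B$ to $\tr(\bv A^3)$; the paper's proof does not. \Cref{prop:musco} sits in the section of purely classical reductions, and its proof is nothing more than a citation of Theorem~15 of \cite{musco2017spectrum} together with a few stated adaptations. The ``computes $\bv A\bv B$'' in the conclusion appears to be a slip: the proof only establishes the reduction from $\tr(\bv A^3)$ to $f$, and that is exactly how \Cref{prop:musco} is later invoked in the proof of \Cref{cor:intro}. What you have written is, in effect, the proof of \Cref{cor:intro}. Second, on the mechanism: you describe evaluating $f$ at several shifts $c\bv I - \bv A$ and interpolating to isolate $p_3 = \tr(\bv A^3)$. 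The approach the paper cites instead evaluates $f$ once at $\bv B = \tfrac{1}{\delta}\bv I - \bv A$ (the integer rescaling of \cite{musco2017spectrum}'s $\bv I - \delta\bv A$) with $\delta = 1/\poly(n)$ chosen so small that successive series terms are separated by orders of magnitude; one then computes $p_0,p_1,p_2$ directly in $\tilde O(n^2)$ time, subtracts them off, and rounds to recover the integer $p_3$. The paper also flags that quantities like $\det(\bv B)$ can be as large as $\exp(\poly(n))$ and must therefore be handled in $O(\log n)$-bit floating point. Both routes work; yours is a clean alternative, and it sidesteps the integer-entry issue automatically since $c\bv I - \bv A$ is already integer for integer $c$.
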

\begin{proof}
    We apply Theorem 15 (and the resulting corollaries) of \cite{musco2017spectrum} with a few minor modifications. They state their reduction from triangle counting (i.e., determining if $\tr(\bv A^3) > 0$ for symmetric binary $\bv A$) rather than computing $\tr(\bv A^3)$. However, it is not hard to check that the reduction indeed exactly computes $\tr(\bv A^3)$ and holds as long as $\bv A$ is symmetric and has integer entries bounded by $\poly(n)$. Additionally, their reductions determine $\tr(\bv A^3)$ by computing $f(\bv B)$  for $\bv B = \bv I - \delta \bv A$ where $\delta = \frac{1}{\poly(n)}$. In our setting, we need $\bv B$ to have integer entries. So, we instead work with $\bv B = \frac{1}{\delta} \bv I - \bv A$, which is integer, with $\poly(n)$ bounded entries. Finally, \cite{musco2017spectrum} uses the real-RAM model, and thus does not consider the cost of representing the entries in $f(\bv B)$. As discussed, $f(\bv B) = \det(\bv B)$ and $f(\bv B) = \tr(\bv B^{-1})$ can  potentially be as large as $\exp(\poly(n))$. Nevertheless, using a floating point representation, we can represent $f(\bv B)$ to $\frac{1}{\poly(n)}$ relative accuracy using $O(\log n)$ bits, and perform arithmetic operations on its value in $\tilde O(\log n)$ time.
\end{proof}

\subsection{Proof of Main Results}\label{sec:mainProofs}

With the above preliminaries in place, we now prove our main results, \Cref{thm:intro} and \Cref{cor:intro}.

\begin{proof}[Proof of \Cref{thm:intro}]
Given $\bv A, \bv B \in \Z^{n \times n}$ with $\poly(n)$ bounded entries, we apply \Cref{prop:convertFFtoInt} to reduce computing their product to computing the product of $\bv A_q,\bv B_q \in \F_q^{n \times n}$ for some $q = \poly(n)$. We in turn reduce computing this product to computing $\tr(\bv A_q \bv B_q \bv C_q)$ for $\bv A_q, \bv B_q, \bv C_q \in \F_q^{n \times n}$ via Bernstein-Vazirani, i.e., \Cref{theorem:QuantTrace}. Finally, we reduce $\tr(\bv A_q \bv B_q \bv C_q)$ to computing $\tr(\bv M_q^3)$ for symmetric $\bv M_q \in \F_q^{3n \times 3n}$ by \Cref{prop:trace}. This operation trivially reduces to computing $\tr(\bv {M}^3)$ for symmetric $\bv {M} \in \Z^{n \times n}$ with $\poly(n)$ bounded entries by taking the result modulo $q$. Each reduction incurs just a constant factor multiplicative runtime increase plus an additive $\tilde O(n^2)$, giving our final claimed runtime of $O(T(n)) + \tilde O(n^2)$.
\end{proof}

\begin{proof}[Proof of \Cref{cor:intro}]
   We reduce computing  $\bv A, \bv B \in \Z^{n \times n}$ with $\poly(n)$ bounded entries to computing $\tr(\bv M^3)$ for symmetric $\bv M$ with $\poly(n)$ bounded entries via \Cref{thm:intro}. We in turn reduce this problem to any of the listed problems via \Cref{prop:musco}, giving the result.
\end{proof}

\section{Conclusion and Open Questions}

Our observations help improve our understanding of why the complexity of solving many scalar-output linear algebra problems to high precision seems to be limited by $O(n^\omega)$. There are many interesting open questions concerning the fine-grained complexity of these linear algebraic problems, on both classical and quantum computers, including:

\begin{itemize}
\item Can we give a quantum reduction from matrix multiplication to computing $\bv A^{-1} \bv x$ (i.e., solving a linear system)? One approach is to use Bernstein-Vazirani to extract $\bv A^{-1}$ from the product $\bv A^{-1} \bv x$, and then use the classic reduction from matrix-multiplication to inversion. A similar idea with matrix-vector products has been used e.g., in \cite{childs2021quantum,lee2021quantum}. However, the challenge becomes that any algorithm for computing $\bv A^{-1} \bv x$ will inherently be approximate, and the robustness of Bernstein-Vazirani to approximation error is unclear. For related work giving a classical reduction from the problem of rank finding (for which our best algorithms run in $O(n^\omega)$ time) to linear system solving, see \cite{bafna2021optimal}. It is not known how to reduce matrix multiplication to rank finding on a classic computer or an algebraic circuit -- can we find such a reduction on quantum computers?
\item Related to the above, a noise robust version of Bernstein-Vazirani would conceivably allow computing the gradient of any function using just two black-box calls to the function via finite differencing. This would allow us to translate the Baur-Strassen theorem for algebraic circuits (and any reductions that it implies) to quantum computers. For related work on the connection between Bernstein-Vazirani and gradient computation, see \cite{jordan2005fast}.
\item Can we give any classical or quantum reduction from matrix multiplication to computing the spectral norm $\norm{\bv A}_2$ (i.e., the largest singular value of $\bv A$) to $\frac{1}{\poly(n)}$ relative error? This quantity is notably not covered by the reductions of \cite{musco2017spectrum}.
\item Finally, an appealing but potentially difficult open question is to improve existing classical reductions. We know that an $O(n^{3-\delta})$ time algorithm for $\tr(\bv A^3)$ gives an $O(n^{3-\delta/3})$ time algorithm for Boolean matrix multiplication. Can we tighten this?
\end{itemize}

\subsubsection*{Acknowledgements} Both authors were partially supported by NSF Grant 2046235.




\bibliography{bib}
\bibliographystyle{alpha}

\appendix

\section{Bernstein-Vazirani Over Finite Fields}\label{app:quantum}

For completeness, we include a proof of the Bernstein-Vazirani theorem \cite{bernstein1993quantum} for linear functions over $\F_q$. We start with some basic quantum computing preliminaries.

\smallskip

\noindent \textbf{Quantum Representations of Field Elements.} We will represent $k \in \{0,\ldots,q-1\}$ as an element in $\F_q$ using $\lceil\log_2(q) \rceil$ qubits. The corresponding quantum state will be written as $\ket{k}= \ket{k_1,k_2,...k_{\lceil \log_2(q) \rceil}} = \ket{k_1} \otimes \ket{k_2} \otimes ...\otimes ... \ket{ k_{\lceil \log_2(q) \rceil }}$, where each $k_i$ is a qubit representing a binary value and $\otimes$ is the tensor product. For $\bv y \in \F_q^n$, its quantum state representation will be written as $\ket{\bv y} = \ket{\bv y_1,\bv y_2,\ldots,\bv y_n} = \ket{\bv y_1} \otimes \ket{\bv y_2} \otimes \ldots \otimes \ket{\bv y_n}$, where $\bv y_1,\ldots,\bv y_n \in \F_q$ are the entries of $\bv y$.




%

\begin{definition}[Quantum Fourier Transform over $\F_q$]\label{def:qft}
Let $\omega_q = e^{\frac{2\pi i}{q}}$. The quantum Fourier transform $QFT_q$ maps any state $\ket{x} = \sum_{k=0}^{q-1} x_k \ket k$ to $QFT_q \ket{x} = \sum_{j=0}^{q-1} y_j \ket j$, with
\begin{equation}
    y_j = \dfrac{1}{\sqrt{q}} \sum_{k=0}^{q-1} x_k \omega^{jk}_q.
\end{equation}
The inverse quantum Fourier transform $QFT_q^{-1}$ similarly maps $\ket{x}$ to $QFT_q^{-1} \ket{x} = \sum_{j=0}^{q-1} y_j \ket{j}$ with $y_j = \frac{1}{\sqrt q} \sum_{k=0}^{q-1} x_k \omega_q^{-jk}$. 
The cost of computing $QFT_q$ or $QFT_q^{-1}$ with a quantum circuit is $O(\log q \cdot \log(\log q)^2 \cdot \log\log \log q)$ \cite{cleve2000fast}.
\end{definition}
The QFT is extended to length-$n$ vectors via tensoring. In particular, for $\ket{\bv x} \in \F_q^n$, $QFT_q^{\otimes n}\ket{\bv x} = \bigotimes_{i=1}^n QFT_q \ket{\bv x_i}$. The runtime of computing this operation is just the runtime of computing each of the $n$ individual QFTs, so $\tilde O(n \log q)$ time. The extension of $QFT_q^{-1}$ to vectors is analogous.

\begin{definition}[Phase Oracle]\label{def:phase}
    A phase oracle for a function $f: \F_q^n \rightarrow F_q$ is a unitary transformation $U_f$ defined by $U_f \ket{\bv x} = \omega_q^{f(\bv x)} \ket{\bv x}$ for $\omega_q = e^{\frac{2\pi i}{q}}$.
\end{definition}

We can now state the Berstein-Vazirani theorem over general finite fields. \Cref{thm:ExtendedBV} is a simplified version of this full statement.

\begin{theorem}[Bernstein-Vazirani Theorem]\label{thm:ExtendedBVFull}
For any $f:\mathbb{F}_q^n\rightarrow \mathbb{F}_q$ of the form $f(\bv x)=\langle \bv s, \bv x \rangle$, where $\bv s\in\mathbb{F}_q^n$ is a fixed vector, there exists a quantum algorithm that outputs $\bv s$ with probability $1$ using just a single call to an phase oracle (\Cref{def:phase}) for $f$ and $\tilde O(n \log q)$ additional time.
\end{theorem}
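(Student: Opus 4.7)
The plan is to run the Bernstein-Vazirani algorithm verbatim, substituting the $\F_q$ quantum Fourier transform from \Cref{def:qft} for the Hadamard gates used in the $\F_2$ version. On $n$ registers of $\lceil \log_2 q\rceil$ qubits each, the algorithm will: (i) initialize $\ket{\bv 0}$; (ii) apply $QFT_q^{\otimes n}$ to produce the uniform superposition $\frac{1}{\sqrt{q^n}}\sum_{\bv x\in\F_q^n}\ket{\bv x}$; (iii) invoke the phase oracle $U_f$ once, yielding $\frac{1}{\sqrt{q^n}}\sum_{\bv x}\omega_q^{\langle \bv s,\bv x\rangle}\ket{\bv x}$ by \Cref{def:phase} together with the linearity of $f$; (iv) apply $QFT_q^{-1\, \otimes n}$ and measure in the computational basis.

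The correctness argument is a direct amplitude computation on $\ket{\bv j}$ after step (iv). Expanding the inverse QFT coordinate-wise and exchanging the order of summation gives the amplitude $\frac{1}{q^n}\sum_{\bv x\in\F_q^n}\omega_q^{\langle \bv s-\bv j,\bv x\rangle}$, which factors as $\prod_{i=1}^n\frac{1}{q}\sum_{x=0}^{q-1}\omega_q^{(\bv s_i-\bv j_i)x}$. Each inner factor is a geometric series whose value is $1$ when $\bv s_i=\bv j_i$ and $0$ otherwise, since $\omega_q^t$ is a non-trivial $q$-th root of unity for $t\not\equiv 0$ and summing a full cycle of its powers yields zero. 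Hence after step (iv) the state is exactly $\ket{\bv s}$, so measurement returns $\bv s$ with probability $1$.

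For the runtime, each invocation of $QFT_q$ or $QFT_q^{-1}$ takes $\tilde O(\log q)$ time by \cite{cleve2000fast}, so the $2n$ transforms together contribute $\tilde O(n\log q)$; the initial state preparation, the single phase query, and the final computational-basis measurement add only lower-order overhead. Combined, this matches the $\tilde O(n\log q)$ additional-time bound claimed in the statement, with exactly one call to the phase oracle.

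The main subtlety is aligning the additive structure built into $\omega_q=e^{2\pi i/q}$ (which is natural on $\Z_q$) with the linearity of $f$ over $\F_q$. We resolve this by using the paper's standard integer representation of field elements as $\{0,\ldots,q-1\}$, under which \Cref{def:qft} and \Cref{def:phase} use a common additive identification and the identity $\omega_q^{f(\bv x)}=\omega_q^{\langle \bv s,\bv x\rangle}$ holds. With this convention, the geometric-series/character-orthogonality step above goes through cleanly for any $q\ge 2$, completing the proof.
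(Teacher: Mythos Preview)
Your proposal is correct and follows essentially the same approach as the paper: the algorithm (initialize $\ket{\bv 0}$, apply $QFT_q^{\otimes n}$, query $U_f$ once, apply $QFT_q^{-1\,\otimes n}$) and the runtime accounting are identical, and your amplitude computation via coordinate-wise geometric series is exactly the dual of the paper's step of recognizing the post-oracle state as $QFT_q^{\otimes n}\ket{\bv s}$ before inverting. The paper does not explicitly flag the $\F_q$ versus $\Z_q$ additive-structure subtlety you raise, but otherwise the arguments coincide.
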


\begin{proof}
The algorithm start with the all zeros vector $\ket{0}^{\otimes n}$. After applying the quantum Fourier transform to each entry, we obtain an equal superposition over all vectors in $\F_q^n$:
\begin{equation*}
    QFT_q^{\otimes n}\ket{\bv x} = QFT_q^{\otimes n}\ket{0}^{\otimes n} = \left [ \dfrac{1}{\sqrt{q}} \sum_{y=0}^{q-1}\ket{y}\right ]^{\otimes n} = \dfrac{1}{\sqrt{q^n}} \sum_{\bv z \in \mathbb{F}_q^n}\ket{\bv z}.
\end{equation*}
We then apply our phase oracle for $f(\bv x) = \langle \bv s, \bv x \rangle$ to obtain:
\begin{equation}\label{eq:phase}
    U_f \; QFT_q^{\otimes n} \ket{0}^{\otimes n} = \dfrac{1}{\sqrt{q^n}}\sum_{\bv z \in \mathbb{F}_q^n} \omega_q^{f(\bv z)}\ket{\bv z} = \dfrac{1}{\sqrt{q^n}}\sum_{\bv z \in \mathbb{F}_q^n} \omega_q^{\langle \bv s, \bv  z\rangle}\ket{\bv z}.
\end{equation}
Since $\omega_q^{\langle \bv s,\bv z\rangle} = \prod_{i=1}^n \omega_q^{\bv s_i \bv z_i}$ we can expand \eqref{eq:phase} as a tensor product to obtain:
\begin{align}\label{eq:expansion}
U_f \; QFT_q^{\otimes n} \ket{0}^{\otimes n} &= \sum_{\bv z \in \F_q^n} \bigotimes_{i=1}^n \left ( \frac{1}{\sqrt{q}} \omega_q^{\bv s_i \bv z_i} \ket{\bv z_i} \right )\nonumber \\ &= \bigotimes_{i=1}^n \left (\sum_{z =0}^{q-1} \frac{1}{\sqrt{q}} \omega_q^{\bv s_i z} \ket{ z} \right )\nonumber \\ &= \bigotimes_{i=1}^n QFT_q \ket{\bv s_i} = QFT_q^{\otimes n} \ket{\bv s}.
\end{align}
We now simply apply the inverse quantum Fourier transform to extract the vector $\bv s$. I.e., plugging into \eqref{eq:expansion}, we have $QFT_q^{-1 \otimes n} \; U_f \; QFT_q^{\otimes n} \ket{0}^{\otimes n} = QFT_q^{-1 \otimes n} \; QFT_q^{\otimes n} \ket{\bv s} = \ket{\bv s}.$

The algorithm for computing $\bv s$ thus requires one call to the phase oracle $U_f$ along with one application each of $QFT_q^{\otimes n}$ and $QFT_{q}^{-1 \otimes n}$, taking time $\tilde O(n \log q)$.
\end{proof}

\end{document}